\newcommand{\Z}{\mathbb{Z}}
\newcommand{\customvspace}[1]{}
\theoremstyle{plain}
\newtheorem{theorem}{Theorem}
\newtheorem{lemma}{Lemma}
\setlist{topsep=2pt}
\begin{document}
\title{Brandt's Fully Private Auction Protocol Revisited}
\author{Jannik Dreier\thanks{Universit\'e Grenoble 1, CNRS, Verimag,
    France. \href{mailto:Jannik.Dreier@imag.fr,Pascal.Lafourcade@imag.fr}{\{Jannik.Dreier,Pascal.Lafourcade\}@imag.fr}} \and Jean-Guillaume Dumas\thanks{Universit\'e Grenoble 1,
    CNRS,  Laboratoire Jean Kuntzmann (LJK), France. \href{mailto:Jean-Guillaume.Dumas@imag.fr}{Jean-Guillaume.Dumas@imag.fr}} \and  Pascal
  Lafourcade\footnotemark[1]}

\maketitle

\begin{abstract}
  Auctions have a long history, having been recorded as early as 500
  B.C.~\cite{Krishna02}. Nowadays, electronic auctions have been a
  great success and are increasingly used. Many cryptographic
  protocols have been proposed to address the various security
  requirements of these electronic transactions, in particular to
  ensure privacy. Brandt~\cite{Brandt06} developed a protocol
  that computes the winner using homomorphic operations on a
  distributed ElGamal encryption of the bids. He claimed that it
  ensures full privacy of the bidders, i.e. no information apart
  from the winner and the winning price is leaked. We first show that
  this protocol -- when using malleable interactive zero-knowledge
  proofs -- is vulnerable to attacks by dishonest bidders. Such
  bidders can manipulate the publicly available data in a way that
  allows the seller to deduce all participants' bids. 
  Additionally we discuss some issues with verifiability as well as
  attacks on non-repudiation, fairness and the privacy of individual
  bidders exploiting authentication problems.

\end{abstract}

\section{Introduction}\label{sec:intro}

Auctions are a simple method to sell goods and services. Typically a
\emph{seller} offers a good or a service, and the \emph{bidders} make
offers. Depending on the type of auction, the offers might be sent
using sealed envelopes which are opened simultaneously to determine
the winner (the ``sealed-bid'' auction), or an \emph{auctioneer} could
announce prices decreasingly until one bidder is willing to pay the
announced price (the ``dutch auction''). Additionally there might be
several rounds, or offers might be announced publicly directly (the
``English'' or ``shout-out'' auction). The winner usually is the
bidder submitting the highest bid, but in some cases he might only
have to pay the second highest offer as a price (the ``second-price''-
or ``Vickrey''-Auction). In general a bidder wants to win the auction
at the lowest possible price, and the seller wants to sell his good at
the highest possible price. For more information on different auction
methods see \cite{Krishna02}.  To address this huge variety of
possible auction settings and to achieve different security and
efficiency properties numerous protocols have been developed,
e.g. \cite{Brandt06,Curtis07,Naor99,Omote01,Peng02,Sadeghi02,Sako00}
and references therein.

One of the key requirements of electronic auction (e-Auction)
protocols is privacy, i.e. the bids of losing bidders remain
private. Brandt proposed a first-price
sealed-bid auction protocol~\cite{Brandt06,Brandt03,Brandt02} and claimed that it is
fully private, i.e. it leaks no information apart from the
winner, the winning bid, and what can be deduced from these two facts
({\it e.g.} that the other bids were lower).

\paragraph{Our Contributions.}
The protocol is based on an algorithm that computes the winner using bids encoded as bit vectors.
In this paper we show that the implementation using the homomorphic property of a distributed Elgamal
encryption proposed in the original paper suffers from a weakness. 
In fact, we
prove that any two different inputs (i.e. different bids) result in
different outcome values, which are only hidden using random
values. We show how a dishonest participant can remove this random
noise, if malleable interactive zero-knowledge proofs are used. The seller can
then efficiently compute the bids of all bidders, hence completely
breaking privacy. We also discuss two problems with verifiability, and how the lack of authentication enables attacks on privacy
even if the above attack is prevented via non-malleable
non-interactive proofs. Additionally we show attacks on
non-repudiation and fairness, and propose solutions to all discovered
flaws in order to recover a fully resistant protocol.

\paragraph{Outline.}
In the next section, we recall the protocol of Brandt. 
Then, in the following sections, we present our attacks in
several steps. In Section~\ref{sec:attackcomp}, we first study the
protocol using interactive zero-knowledge proofs and without noise. 
Then we show how a dishonest participant can
remove the noise, thus mount the attack on the protocol with noise,
and discuss countermeasures. Finally, in
Section~\ref{sec:attackverif}, we 
discuss verifiability and in Section~\ref{sec:attackauth} we discuss
attacks on fairness, non-repudiation and privacy exploiting the lack
of authentication.

\section{The Protocol}\label{sec:protocol}

The protocol of Brandt~\cite{Brandt06} was designed to ensure full
privacy in a completely distributed way. It exploits the homomorphic
properties of a distributed El-Gamal encryption
scheme~\cite{ElGamal85} for a secure multi-party computation of the
winner. Then it uses zero-knowledge proofs of knowledge of discrete
logarithms to ensure correctness of the bids while preserving privacy.
We first give a high level description of the protocol and
then present details on its main cryptographic primitives.

\subsection{Informal Description}

The participating $n$ bidders and the seller communicate essentially
using broadcast messages. The latter can for example be implemented
using a bulletin board, i.e. an append-only memory accessible to
everybody.  The bids are encoded as $k$-bit-vectors where each entry
corresponds to a price. If the bidder $a$ wants to bid the price
$b_a$, all entries will be $1$, except the entry $b_a$ which will be
$Y$ (a public constant). Each entry of the vector is then encrypted
separately using a $n$-out-of-$n$-encryption scheme set up by all
bidders. The bidders use multiplications of the encrypted values to
compute values $v_{a j}$, exploiting the homomorphic property of the
encryption scheme. Each
one of this values is $1$ if the bidder $a$ wins at price $j$, and is
a random number otherwise.  The decryption of the final values takes
place in a distributed way to ensure that nobody can access
intermediate values.

\subsection{Mathematical Description  (Brandt~\cite{Brandt06})}

Let $\mathbb{G}_q$ be a multiplicative subgroup of order $q$, prime,
and $g$ a generator of the group.
We consider that $i, h \in \{1,
\ldots, n\}$, $j, bid_a \in \{1, \ldots, k\}$ (where $bid_a$ is the bid chosen by the bidder with index $a$), $Y \in
\mathbb{G}_q\setminus{\{1\}}$.  More precisely, the $n$ bidders execute
the following five steps of the protocol~\cite{Brandt06}:
\begin{enumerate}
\item {\bf Key Generation}\\
Each bidder $a$, whose bidding price is $bid_a$ among $\{1, \ldots, k\}$ does the following:
\begin{itemize}
 \item chooses a secret $x_a \in \Z/q\Z$
 \item chooses randomly $m^{a}_{i j}$ and $r_{a
 j} \in \Z/q\Z$ for each $i$ and $j$. 
 \item publishes $y_{a} = g^{x_a}$ and proves the knowledge of $y_a$'s discrete logarithm.
 \item using the  published $y_i$ then computes $y = \prod_{i=1}^{n} y_i$.
\end{itemize}
\item\label{it:bid} {\bf Bid Encryption}\\
Each bidder $a$ 
\begin{itemize}
 \item sets $b_{a j} = \begin{cases}
   Y & \mbox{ if } j = bid_a \\
   1 & \mbox{ otherwise}
  \end{cases}$
  \item publishes $\alpha_{a j} = b_{a j} \cdot y^{r_{a j}}$ and $\beta_{a j} = g^{r_{a j}}$ for each $j$.
 \item proves that for all $j$, $\log_g(\beta_{a j})$ equals
   $\log_y(\alpha_{a j})$ or $\log_y\left(\frac{\alpha_{a j}}{Y}\right)$, and that \\ 
   $ \log_y\left(\frac{\prod_{j=1}^k \alpha_{a j}}{Y}\right) = \log_g\left(\prod_{j=1}^k \beta_{a j}\right)$.
\end{itemize}
\item\label{it:outcome} {\bf Outcome Computation}
\begin{itemize}
 \item Each bidder $a$ computes and publishes for all $i$ and $j$:
\begin{equation*}
 \gamma_{i j}^{a} = \left( \left( \prod_{h=1}^n \prod_{d=j+1}^k \alpha_{h d} \right) \cdot \left( \prod_{d=1}^{j-1} \alpha_{i d} \right) \cdot \left( \prod_{h=1}^{i-1} \alpha_{h j} \right) \right)^{m_{i j}^a}
\end{equation*}
\begin{equation*}
 \delta_{i j}^{a} = \left( \left( \prod_{h=1}^n \prod_{d=j+1}^k \beta_{h d} \right) \cdot \left( \prod_{d=1}^{j-1} \beta_{i d} \right) \cdot \left( \prod_{h=1}^{i-1} \beta_{h j} \right) \right)^{m_{i j}^a}
\end{equation*}
\\and proves its correctness.
\end{itemize}
\item {\bf  Outcome Decryption}
\begin{itemize}
 \item Each bidder $a$ sends $\phi_{i j}^a = (\prod_{h=1}^n \delta_{i j}^h)^{x_a}$ for each $i$ and $j$ to the seller and proves its correctness. After having received all values, the seller publishes $\phi_{i j}^h$ for all $i$, $j$, and $h \neq i$.
\end{itemize}
\item {\bf Winner determination}
\begin{itemize}
 \item Everybody can now compute $v_{a j} = \frac{\prod_{i=1}^n \gamma_{a j}^i}{\prod_{i=1}^n \phi_{a j}^i}$ for each $j$.
 \item If $v_{a w} = 1$ for some $w$, then the bidder $a$ wins the auction at price $p_w$.
\end{itemize}
\end{enumerate}

\subsection{Malleable proofs of knowledge and discrete
  logarithms}\label{ssec:PoK}
\newcommand{\pdl}{\hyperlink{def:pdl}{PDL}\xspace}
\newcommand{\eqdl}{\hyperlink{def:eqdl}{EQDL}\xspace}

In the original paper~\cite{Brandt06} the author suggests using zero-knowledge proofs of knowledge to protect against active adversaries. The basic protocols he proposes are interactive and malleable, but can be converted into non-interactive proofs using the Fiat-Shamir heuristic~\cite{Fiat86}, as advised by the author. 
We first recall the general idea of such proofs, then we expose the man-in-the-middle
attacks on the interactive version, which we will use as part of our first attack.

Let \pdl denote a {\em \hypertarget{def:pdl}{proof of knowledge of a discrete logarithm}}. A
first scheme for \pdl was developed in 1986 by Chaum et al.~\cite{Chaum86}. In the original auction
paper~\cite{Brandt06} Brandt proposes to use a non-interactive variant of \pdl as developed by
Schnorr~\cite{Schnorr91}, which are malleable. Unfortunately, interactive malleable \pdl are subject to
man-in-the-middle attacks~\cite{Katz2002}.  We first recall the classic
$\Sigma$-protocol on a group with generator $g$ and order $q$
\cite{Bangerter05,Burmester89,Chaum87}.  Peggy and Victor know $v$ and $g$, but
only Peggy knows $x$, so that $v = g^x$.  She can prove this fact, without
revealing $x$, by executing the following protocol:
\begin{enumerate}
\item Peggy chooses $r$ at random and sends $z = g^r$ to Victor.
\item Victor chooses a challenge $c$ at random and sends it to Peggy.
\item Peggy sends $s = (r + c\cdot x) \mod q$ to Victor.
\item Victor checks that $g^s= z\cdot v^c$.
\end{enumerate}

\subsubsection[Man-in-the-middle attacks on interactive PDL]{Man-in-the-middle attacks on interactive \pdl}\label{sec:MITMPDL}
Suppose Peggy possesses some secret discrete logarithm $x$.  We present here the
man-in-the-middle attack of~\cite{Katz2002}, where an attacker can pretend to
have knowledge of any affine combination of the secret $x$, even providing the
associated proof of knowledge, without breaking the discrete logarithm.  To
prove this possession to say Victor, the attacker will start an interactive proof
knowledge session with Peggy and another one with Victor. The attacker will
transform Peggy's outputs and forward Victor's challenges to her. 
The idea is to use the proof
of possession of Peggy's $x$, to prove possession of $1-x$ to Victor. Indeed to
prove for instance possession of just $x$ to Victor, an attacker would only have to
forward Peggy's messages to Victor and Victor's messages to Peggy. The idea of the
attack is similar, except that one needs to modify the messages of Peggy. We
show the example of $1-x$ in Figure~\ref{fig:exm} since it is used in 
Section~\ref{ssec:noiserem} to mount our attack. Upon demand by Victor to prove knowledge of $1-x$, Mallory,
the man-in-the-middle, simply starts a proof of knowledge of $x$ with
Peggy. Peggy chooses a random exponent $r$ and sends the commitment $z=g^r$ to
Mallory. Mallory simply inverts $z$ and sends $y=z^{-1}$ to Victor. Then Victor
presents a challenge $c$ that Mallory simply forwards without modification to Peggy. Finally
Peggy sends a response $s$ that Mallory combines with $c$, as $u=c-s$, to
provide a correct answer to Victor. This is summarized in Figure~\ref{fig:exm}.
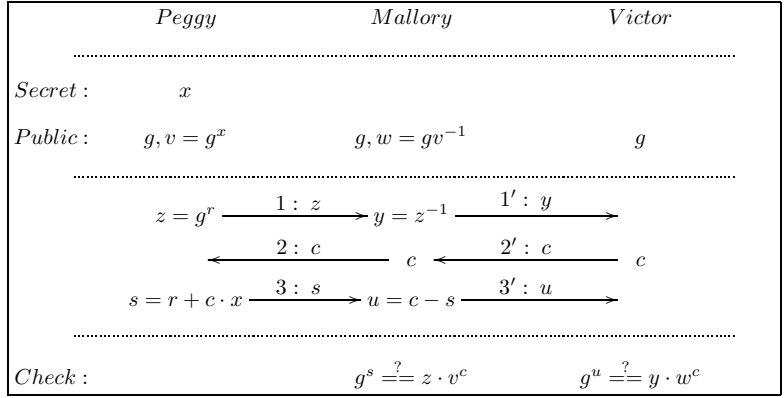
\begin{figure}[ht]\center\customvspace{-15pt}
\noindent\resizebox{.85\linewidth}{!}{$$
\xy
\xymatrix@R=7pt@C=1pc@W=15pt{
&Peggy && Mallory && Victor&\\
\ar@{.}[rrrrrr]&&&&&&\\
Secret:&x&&&&\\
Public:&g,v=g^x&&g,w=gv^{-1}&&g\\
\ar@{.}[rrrrrr]&&&&&&\\
&z=g^r\ar[rr]^*[*1.]{1:\ z} && y=z^{-1}\ar[rr]^*[*1.]{1':\ y}&&\\
&&& c\ar[ll]_*[*1.]{2:\ c} && c\ar[ll]_*[*1.]{2':\ c}\\
&s=r+c\cdot x\ar[rr]^*[*1.]{3:\ s} && u=c-s\ar[rr]^*[*1.]{3':\ u}&&\\
\ar@{.}[rrrrrr]&&&&&&\\
Check:&&& g^s \stackrel{?}{==} z\cdot v^c && g^u \stackrel{?}{==} y\cdot w^c\\
}\POS*\frm{-}
\endxy
$$
}
\caption{Man-in-the-middle \protect\pdl of $1-x$, with $x$ an unknown discrete logarithm.}\label{fig:exm}\customvspace{-15pt}
\end{figure}

Actually, the attack works in the generic settings
of~\cite{Burmester89,Maurer09} or of $\Sigma$-protocols~\cite{Cramer98}. We let
$f:\Gamma\rightarrow\Omega$ denote a one way homomorphic function between two
commutative groups $(\Gamma,+)$ and $(\Omega,\times)$.   
We use this generalization to prevent possible countermeasures of our
first attack in Section~\ref{ssec:countermitm}.

For an integral value $\alpha$, $\alpha \cdot x \in \Gamma$
(resp. $y^\alpha\in\Omega$) denotes $\alpha$ applications of the group law $+$
(resp. $\times$). 
For a secret $x\in\Gamma$, and any $(h,\alpha,\beta)\in\Gamma\times\Z^2$, the
attacker can build a proof of possession of $\alpha\cdot h+\beta\cdot x$.
In the setting of the example of Figure~\ref{fig:exm}, we used $f(x)=g^x$,
$h=1$, $\alpha=1$ and $\beta=-1$. 

In the general case also, upon demand of proof by Victor, Mallory starts a proof
with Peggy. The secret of Peggy is $x$, and the associated witness $v$ is
$v=f(x)$. Then Mallory wants to prove that his witness $w$ corresponds to any
combination of $x$ with a logarithm $h$ that he knows. With only public
knowledge and his chosen $(h,\alpha,\beta)\in\Gamma\times\Z^2$, Mallory is able
to compute $w=f(h)^\alpha\cdot v^\beta$.

For the proof of knowledge, Mallory still modifies the commitment $z=f(r)$
of Peggy to $y=z^\beta$. Mallory forwards the challenge $c$ of Victor without modification.
Finally Mallory transforms the response $s$ of Peggy, still with only
public knowledge and his chosen $(h,\alpha,\beta)\in\Gamma\times\Z^2$,
as $u=c\cdot(\alpha\cdot h)+\beta\cdot s$. 
We summarize this general attack on Figure~\ref{fig:generic}. 
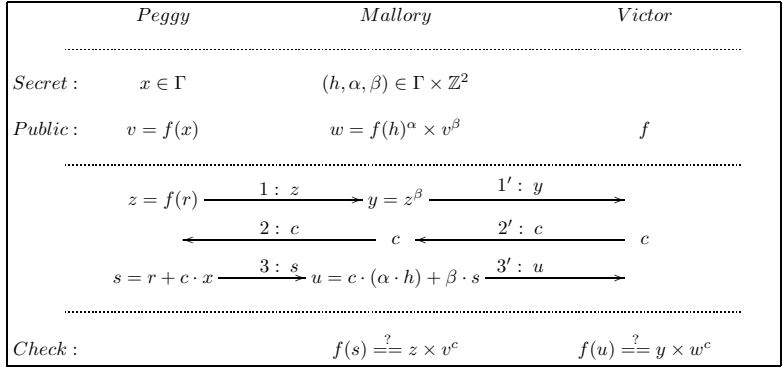
\begin{figure}[ht]\center\customvspace{-15pt}
\noindent\resizebox{.85\linewidth}{!}{$$
\xy
\xymatrix@R=7pt@C=1pc@W=15pt{
&Peggy && Mallory && Victor&\\
\ar@{.}[rrrrrr]&&&&&&\\
Secret:&x\in\Gamma&& (h,\alpha,\beta)\in\Gamma\times\Z^2 &&\\
Public:&v=f(x)&&w=f(h)^\alpha \times v^\beta && f\\
\ar@{.}[rrrrrr]&&&&&&\\
&z=f(r)\ar[rr]^*[*1.]{1:\ z} && y=z^\beta\ar[rr]^*[*1.]{1':\ y}&&\\
&&& c\ar[ll]_*[*1.]{2:\ c} && c\ar[ll]_*[*1.]{2':\ c}\\
&s=r+c\cdot x\ar[rr]^*[*1.]{3:\ s} && u=c\cdot(\alpha\cdot h)+\beta\cdot s\ar[rr]^*[*1.]{3':\ u}&&\\
\ar@{.}[rrrrrr]&&&&&&\\
Check:&&&f(s) \stackrel{?}{==} z\times v^c && f(u) \stackrel{?}{==} y\times w^c\\
}\POS*\frm{-}
\endxy
$$
}
\caption{Man-in-the-middle attacks proving knowledge of affine transforms of a secret discrete logarithm in the generic setting.}\label{fig:generic}
\customvspace{-15pt}\end{figure}

\begin{lemma} In the man-in-the-middle attack of Figure~\ref{fig:generic} of the
  interactive proof of knowledge of a discrete logarithm, Victor is convinced by
  Mallory's proof of knowledge of $\alpha\cdot h+\beta\cdot x$. 
\end{lemma}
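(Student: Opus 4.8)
The plan is to verify directly that Victor's final acceptance condition $f(u) = y \times w^c$ holds whenever all three parties follow the protocol of Figure~\ref{fig:generic} with Peggy answering honestly. Since the only quantity Victor ever tests is this single group equation in $(\Omega,\times)$, convincing him reduces to establishing this one identity; there is no soundness gap to analyze, because the verification is deterministic. I would also make explicit that the witness being tested is $w = f(\alpha\cdot h + \beta\cdot x)$: by the homomorphism property, $w = f(h)^\alpha \times v^\beta = f(\alpha\cdot h)\times f(\beta\cdot x) = f(\alpha\cdot h + \beta\cdot x)$, so the equation Victor checks really is the verification relation for the secret $\alpha\cdot h + \beta\cdot x$.

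First I would substitute Peggy's honest response $s = r + c\cdot x$ into Mallory's transformed answer, working in the additive group $(\Gamma,+)$:
\[
u = c\cdot(\alpha\cdot h) + \beta\cdot s = (c\alpha)\cdot h + \beta\cdot r + (\beta c)\cdot x .
\]
Applying $f$ and expanding each summand through the identity $f(n\cdot t) = f(t)^n$ (valid for every integer $n$, since $f$ is a group homomorphism between $(\Gamma,+)$ and $(\Omega,\times)$) gives
\[
f(u) = f(h)^{\alpha c} \times f(r)^{\beta} \times f(x)^{\beta c}.
\]

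Second, I would independently expand the right-hand side using the definitions recorded in the figure, namely $y = z^\beta$ with $z = f(r)$, and $w = f(h)^\alpha \times v^\beta$ with $v = f(x)$:
\[
y \times w^c = f(r)^{\beta} \times f(h)^{\alpha c} \times f(x)^{\beta c}.
\]
Comparing the two displays and using that $(\Omega,\times)$ is commutative, so the three factors may be reordered freely, yields $f(u) = y \times w^c$, which is exactly Victor's check. This completes the verification.

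The computation is routine, so the only point deserving genuine care — and the place where I expect whatever small subtlety there is to live — is the bookkeeping of the integer scalars: one must confirm that $f(n\cdot t)=f(t)^n$ is applied consistently to $\beta\cdot s$ (whose $x$-part carries the product $\beta c$) and, crucially, that it remains correct when $\beta$ or $c$ is negative, which holds because a homomorphism respects inverses. This is precisely why the statement is phrased for a one-way homomorphism between commutative groups rather than for the concrete $f(x)=g^x$ of Figure~\ref{fig:exm}. Finally, I would note that Mallory forms $w$, $y$, and $u$ from public data and his freely chosen $(h,\alpha,\beta)\in\Gamma\times\Z^2$ alone, never using Peggy's secret $x$, which is what makes this a genuine man-in-the-middle attack rather than an honest proof.
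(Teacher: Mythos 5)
Your proof is correct and takes essentially the same approach as the paper's: substitute Peggy's honest response $s = r + c\cdot x$ into Mallory's transformed answer, push the result through the homomorphism $f$, and confirm it matches Victor's check $f(u) = y\times w^c$. The only cosmetic difference is that the paper groups $u$ as $\beta\cdot r + c\cdot(\alpha\cdot h+\beta\cdot x)$ and applies $f$ once, whereas you expand both sides into atomic factors $f(h)^{\alpha c}\times f(r)^{\beta}\times f(x)^{\beta c}$ and compare; the underlying computation is identical.
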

\begin{proof}
Indeed, 
\begin{equation}\label{eq:mitmresp}
u=c\cdot(\alpha\cdot h)+\beta\cdot s=c\cdot(\alpha\cdot h)+\beta\cdot(r+c\cdot x)=\beta\cdot
r+c\cdot(\alpha\cdot h+\beta\cdot x).
\end{equation}
Now, since $z=f(r)$, $y=z^\beta$, $v=f(x)$ and $f(h)^\alpha\times v^\beta=w$, the
latter Equation~(\ref{eq:mitmresp})
proves in turn that 
\begin{equation}\label{eq:mitmpdl}
f(u) = f(r)^\beta \times f(\alpha\cdot h+\beta\cdot x)^c = z^\beta \times
(f(h)^\alpha\times f(x)^\beta)^c= y\times w^c .
\end{equation}
Now Victor has to verify the commitment-challenge-response $(y,c,u)$ of Mallory for
his witness $w$. Then Victor needs to checks whether $f(u)$ corresponds to 
$y\times w^c$, which is the case as shown by the latter
Equation~(\ref{eq:mitmpdl}).\qed
\end{proof}
\subsubsection{Generalizations to equality of discrete logarithms}\label{ssec:eqdl}
We let \eqdl denote a {\em \hypertarget{def:eqdl}{proof of equality of several discrete logarithms}}.
Any \pdl can in general easily be transformed to an \eqdl by applying it $k$
times on the same witness. It is often more efficient to combine the application
in one as in \cite{Chaum92,Chow10}, or more generally as composition of $\Sigma$-protocols, here with two logarithms and two generators $g_1$ and $g_2$. 
Peggy  wants to prove that she knows $x$ such that $v = g_1^x$ and $w = g_2^x$:
\begin{enumerate}
\item Peggy chooses $r$ at random and sends $\lambda = g_1^r$ and $\mu = g_2^r$
  to Victor.
\item Victor chooses a challenge $c$ at random and sends it to Peggy.
\item Peggy computes $s = (r + c \cdot x) \mod q$ and sends it to Victor.
\item Victor tests if $g_1^s = \lambda \cdot v^c$ and $g_2^s = \mu \cdot w^c$.
\end{enumerate}
This protocol remains malleable, and the previous attacks are
still valid since the response remains of the form $r+c\cdot x$.

\subsubsection{Countermeasures}\label{ssec:counter}

Direct countermeasures to the above attacks are to use non-interactive and/or non-malleable proofs:
\begin{itemize}
\item An interactive protocol can be converted into a non-interactive one using
  the Fiat-Shamir heuristic~\cite{Fiat86}. 
\item Also the first \pdl by \cite{Chaum86} uses bit-flipping, and more
  generally non-malleable protocols like~\cite{Fischlin2009} could be used.
\end{itemize}

We will show in the following that if the proofs proposed in the original paper are not converted into non-interactive proofs, there is an attack on privacy. Note that even if non-interactive non-malleable zero-knowledge proofs are used, a malicious attacker in control of the network can nonetheless recover any bidder's bid as the messages are not authenticated, as we show in Section~\ref{sec:attackauth}.

\section{Attacking the fully private computations}\label{sec:attackcomp}
The first attack we present uses some algebraic
properties of the computations performed during the protocol
execution.

\subsection{Analysis of the outcome computation}\label{ssec:outcome}

The idea is to analyze the computations done in
Step~\ref{it:outcome} of the protocol. Consider the following example
with three bidders and three possible prices.  Then the first bidder computes
\[
\begin{array}{rrrrclcll}
 \gamma_{1 1}^1 = ( &(\alpha_{12} \cdot  \alpha_{13}  \cdot & \alpha_{22}
 \cdot \alpha_{23} \cdot & \alpha_{32} \cdot \alpha_{33}) & \cdot  & (1) & \cdot  & (1) & )^{m_{11}^1} \\
 \gamma_{1 2}^1 = ( & (\alpha_{13}  \cdot & \alpha_{23}\cdot & \alpha_{33}) & \cdot  & (\alpha_{11}) & \cdot  & (1) & )^{m_{12}^1} \\
 \gamma_{1 3}^1 = ( & & & (1) & \cdot  & (\alpha_{11} \cdot \alpha_{12}) & \cdot  & (1) & )^{m_{13}^1} \\
 \gamma_{2 1}^1 = ( & (\alpha_{12} \cdot \alpha_{13} \cdot &
 \alpha_{22}  \cdot \alpha_{23}  \cdot& \alpha_{32}  \cdot \alpha_{33}) & \cdot  & (1) & \cdot  & (\alpha_{11}) & )^{m_{21}^1} \\
 \gamma_{2 2}^1 = ( & (\alpha_{13} \cdot & \alpha_{23} \cdot & \alpha_{33}) & \cdot  & (\alpha_{21}) & \cdot  & (\alpha_{12}) & )^{m_{22}^1} \\
 \gamma_{2 3}^1 = ( & & & (1) & \cdot  & (\alpha_{21} \cdot \alpha_{22}) & \cdot  & (\alpha_{13}) & )^{m_{23}^1} \\
 \gamma_{3 1}^1 = ( & (\alpha_{12} \cdot \alpha_{13} \cdot & \alpha_{22} \cdot \alpha_{23} \cdot & \alpha_{32} \cdot \alpha_{33}) & \cdot  & (1) & \cdot  & (\alpha_{11} \cdot \alpha_{21}) & )^{m_{31}^1} \\
 \gamma_{3 2}^1 = ( & (\alpha_{13} \cdot & \alpha_{23} \cdot &  \alpha_{33}) & \cdot  & (\alpha_{31}) & \cdot  & (\alpha_{12} \cdot \alpha_{22}) & )^{m_{32}^1} \\
 \gamma_{3 3}^1 = ( & & & (1) & \cdot  & (\alpha_{31} \cdot \alpha_{32}) & \cdot  & (\alpha_{13} \cdot \alpha_{23}) & )^{m_{33}^1} \\
\end{array}
\]
The second and third bidder do the same computations, but using different random values $m_{ij}^a$. Since each $\alpha_{ij}$ is either the encryption of $1$ or $Y$, for example the value $\gamma_{2 2}^1$ will be an encryption of $1$ only if
\begin{itemize}
 \item nobody submitted a higher bid (the first block) and 
 \item bidder 2 did not bid a lower bid (the second block) and 
 \item no bidder with a lower index submitted the same bid (the third block).
\end{itemize}
If we ignore the exponentiation by $m_{i j}^a$, each $\gamma_{i j}^a$
is the encryption of the product of several $b_{i j}$'s. Each $b_{i
  j}$ can be either 1 or $Y$, hence $({\gamma_{i j}^a})^{-m_{i j}^a}$
will be the encryption of a value $Y^{l_{i j}}$, where $0 \leq l_{i j}
\leq n$. The lower bound of $l_{i j}$ is trivial, the upper bound
follows from the observation that each $\alpha_{i j}$ will be used at
most once, and that each bidder will encrypt $Y$ at most once.

Assume for now that we know all $l_{i j}$. We show next that this is
sufficient to obtain all bids.
Consider the function $f$ which takes as input the following vector\footnote{By abuse of notation we write $log_s \left( \begin{matrix} x_1, & \ldots, & x_n \\ \end{matrix} \right)$ for $\left( \begin{matrix} log_s(x_1), & \ldots, & log_s(x_n) \\ \end{matrix} \right)$.}:
$
 b = \log_ Y\left( \left( \begin{array}{ccccccccccccc}
  b_{11}, &
  \ldots, &
  b_{1k}, &
  \hspace{2mm} &
  b_{21}, &
  \ldots, &
  b_{2k}, &
  \hspace{2mm} &
  \ldots, &
  \hspace{2mm} &
  b_{n1}, &
  \ldots, &
  b_{nk} \\
 \end{array} \right)^T \right)
$, 
and returns the values $l_{i j}$. The input vector is thus a vector of all bid-vectors, where $1$ is replaced by $0$ and $Y$ by $1$.
Consider our above example with three bidders and three possible prices, then we have:
\\\[
 b = \log_Y \left( \left( \begin{array}{ccccccccccc}
  b_{11}, &
  b_{12}, &
  b_{13}, &
  \hspace{2mm} &
  b_{21}, &
  b_{22}, &
  b_{23}, &
  \hspace{2mm} &
  b_{31}, &
  b_{32}, &
  b_{33} \\
 \end{array} \right)^T \right).
\]
A particular instance where bidder 1 and 3 submit price 1, and bidder 2 submits price 2 would then look as:
$
 b = \left( \begin{array}{ccccccccccc}
  1, &
  0, &
  0, &
  \hspace{2mm} &
  0, &
  1, &
  0, &
  \hspace{2mm} &
  1, &
  0, &
  0 \\
 \end{array} \right)^T
$. 
Hence only the factors $\alpha_{11}$, $\alpha_{22}$ and $\alpha_{31}$ are encryptions of $Y$, all other $\alpha$'s are encryptions of $1$. By simply counting how often the factors $\alpha_{11}$, $\alpha_{22}$ and $\alpha_{31}$ show up in each equation as described above, we can compute the following result:
$f(b)=
 \left( \begin{array}{ccccccccccc}
  1, &
  1, &
  1, &
  \hspace{2mm} &
  2, &
  0, &
  1, &
  \hspace{2mm} &
  2, &
  1, &
  1 \\
 \end{array} \right)^T$.
Note that since we chose the input of $f$ to be a bit-vector, we have to simply count the ones (which correspond to $Y$'s) in particular positions in $b$, where the positions are determined by the factors inside $\gamma_{i j}^a$. Hence we can express $f$ as a matrix, i.e. $f(b) = M \cdot b$ for the following matrix $M$:
\begin{equation*}
 f(b) = M \cdot b = \left[ \begin{array}{ccccccccccc}
  0 & 1 & 1 & \hspace{2mm} & 0 & 1 & 1 & \hspace{2mm} & 0 & 1 & 1 \\
  1 & 0 & 1 & & 0 & 0 & 1 & & 0 & 0 & 1 \\
\vspace{2mm}
  1 & 1 & 0 & & 0 & 0 & 0 & & 0 & 0 & 0 \\
  1 & 1 & 1 & & 0 & 1 & 1 & & 0 & 1 & 1 \\
  \mathbf{0} & \mathbf{1} & \mathbf{1} & & \mathbf{1} & \mathbf{0} & \mathbf{1} & & \mathbf{0} & \mathbf{0} & \mathbf{1} \\
\vspace{2mm}
  0 & 0 & 1 & & 1 & 1 & 0 & & 0 & 0 & 0 \\
  1 & 1 & 1 & & 1 & 1 & 1 & & 0 & 1 & 1 \\
  0 & 1 & 1 & & 0 & 1 & 1 & & 1 & 0 & 1 \\
  0 & 0 & 1 & & 0 & 0 & 1 & & 1 & 1 & 0 \\
 \end{array} \right] \cdot  \left( \begin{matrix}
  1 \\
  0 \\
\vspace{2mm}
  0 \\
  0 \\
  \mathbf{1} \\
\vspace{2mm}
  0 \\
  1 \\
  0 \\
  0 \\
 \end{matrix} \right) = \left( \begin{matrix}
  1 \\
  1 \\
\vspace{2mm}
  1 \\
  2 \\
  \mathbf{0} \\
\vspace{2mm}
  1 \\
  2 \\
  2 \\
  1 \\
 \end{matrix} \right)
\end{equation*}
To see how the matrix $M$ is constructed, consider for example 
$({\gamma_{2 2}^a})^{-m_{2 2}^a} = (\alpha_{13}  \cdot \alpha_{23}  \cdot \alpha_{33}) \cdot (\alpha_{21}) \cdot (\alpha_{12})$
which corresponds to the \textbf{second row} in the second vertical block:
\begin{itemize}
 \item $\alpha_{12}$ and $\alpha_{13}$; hence the two ones at position 2 and 3 in the first horizontal block
 \item $\alpha_{21}$ and $\alpha_{23}$; hence the two ones at position 1 and 3 in the second horizontal block
 \item $\alpha_{33}$; hence the one at position 3 in the third horizontal block
\end{itemize}
More generally, we can see that each $3 \times 3$ block consists of potentially three parts:
\begin{itemize}
 \item An upper triangular matrix representing all bigger bids.
 \item On the diagonal we add a lower triangular matrix representing a lower bid by the same bidder,
 \item In the lower left half we add an identity matrix representing a bid at the current price by a bidder with a lower index.
\end{itemize}
This corresponds exactly to the structure of the products inside each $\gamma_{i j}^a$. It is also equivalent to formula (1) in Section 4.1.1 of the original paper~\cite{Brandt06} without the random vector $R_k^*$.
In the following we prove that the function $f$ is injective. We
then discuss how this function can be efficiently inverted (i.e. how
to compute the bids when knowing all $l_{i j}$'s).

\subsection{Linear algebra toolbox}

Let $I_k$ be the $k\times k$ identity matrix;

let $L_k$ be a lower $k\times k$ triangular matrix  with
zeroes on the diagonal, ones in the lower part and zeroes elsewhere;
and let $U_k$ be an upper $k\times k$ triangular matrix  with zeroes on the
diagonal, ones in the upper part, and zeroes elsewhere:

\[
 \ I_k = 
\begin{bmatrix}
1 & 0 & \cdots & 0 \\
0 & \ddots & \ddots & \vdots \\
\vdots & \ddots & \ddots & 0 \\
0 & \cdots & 0 & 1 \end{bmatrix}
~~~~
    L_k= 
\begin{bmatrix}
0 & 0 & \cdots & 0 \\
1 & \ddots & \ddots & \vdots \\
\vdots & \ddots & \ddots & 0 \\
1 & \cdots & 1 & 0 \end{bmatrix}
~~~~
  U_k = 
\begin{bmatrix}
0 & 1 & \cdots & 1 \\
0 & \ddots & \ddots & \vdots \\
\vdots & \ddots & \ddots & 1 \\
0 & \cdots & 0 & 0 \end{bmatrix}
\]

By
abuse of notation we use $I$, $L$ and $U$ to denote respectively $I_k$,
$L_k$ and $U_k$.
For a $k\times k$-matrix $M_k$ we define $(M_k)^r= M \cdots M$ ($r$ times) and $(M_k)^0= I_k$. Let $(e_1, \ldots, e_k) $ be the canonical basis.

\begin{lemma}\label{lem:nilpotent}
Matrices $L_k$ and $U_k$ are nilpotent, {\it i.e.}  $(U_k)^k =0$ and
$(L_k)^k =0$.
\end{lemma}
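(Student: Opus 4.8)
The plan is to reduce the two statements to a single one and then prove nilpotency by tracking how the nonzero entries of the powers drift away from the diagonal. First I would observe that $L_k$ is exactly the transpose of $U_k$: both have a zero diagonal, and the ones of $L_k$ sit strictly below the diagonal while those of $U_k$ sit strictly above it. Since transposition commutes with taking powers, $(L_k)^k = \bigl((U_k)^T\bigr)^k = \bigl((U_k)^k\bigr)^T$, so it suffices to establish $(U_k)^k = 0$; the claim for $L_k$ then follows immediately.

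For $U_k$ itself, the key is the entry pattern of its powers. The entry $(U_k)_{ij}$ is nonzero precisely when $j > i$, i.e. $j \ge i+1$. I would prove by induction on $m$ that a nonzero entry $\bigl((U_k)^m\bigr)_{ij}$ forces $j \ge i + m$. The inductive step unfolds the matrix product $\bigl((U_k)^{m+1}\bigr)_{ij} = \sum_{l} \bigl((U_k)^m\bigr)_{il}\,(U_k)_{lj}$: a nonzero summand requires $l \ge i+m$ from the induction hypothesis and $j \ge l+1$ from the base pattern, so $j \ge i + m + 1$, which is exactly the claim at step $m+1$.

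Taking $m = k$ then finishes the argument: a nonzero entry of $(U_k)^k$ would need $j \ge i + k$, but every index satisfies $1 \le i$ and $j \le k$, so $j - i \le k - 1 < k$, a contradiction. Hence all entries of $(U_k)^k$ vanish and $(U_k)^k = 0$.

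The argument is essentially routine, and the only thing that needs care is the index bookkeeping in the inductive step. An equivalent and perhaps more conceptual route would use the flag $V_r = \mathrm{span}(e_1, \ldots, e_r)$ built from the canonical basis introduced above: since $U_k e_j = e_1 + \cdots + e_{j-1} \in V_{j-1}$, one gets $U_k V_r \subseteq V_{r-1}$, and applying $U_k$ a total of $k$ times sends the whole space $V_k$ into $V_0 = \{0\}$. Either way the main (and mild) obstacle is just making the ``shift by one toward the diagonal'' precise.
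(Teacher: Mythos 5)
Your proof is correct. Note that the paper itself states this lemma without any proof, treating it as a standard fact (a strictly triangular matrix is nilpotent), so there is no argument of the paper's to compare against; your write-up simply fills in that omitted routine verification. Both of your routes are sound: the reduction $L_k = (U_k)^T$ plus the index-drift induction (a nonzero entry of $(U_k)^m$ forces $j \ge i+m$, impossible for $m=k$ since $j-i \le k-1$) is the standard entry-counting argument, and the flag argument $U_k V_r \subseteq V_{r-1}$, which meshes nicely with the canonical basis $(e_1,\ldots,e_k)$ already introduced in the paper and used in Lemma~\ref{lem:luz}, is an equally valid and slightly more conceptual alternative.
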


\begin{lemma}\label{lem:somme}
If $\sum_{j=1}^{k} w_j = 1 $ then we have $L_k \cdot w =  (1,\ldots,1)^T - (I_k +
U_k)\cdot w$.
\end{lemma}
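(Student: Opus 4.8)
The plan is to recognize that the three matrices $L_k$, $I_k$ and $U_k$ together tile the full sparsity pattern of the $k \times k$ all-ones matrix. First I would record the elementary observation that $L_k$ places ones strictly below the diagonal, $I_k$ places ones exactly on the diagonal, and $U_k$ places ones strictly above the diagonal. Their nonzero supports are pairwise disjoint and their union covers every one of the $k^2$ positions, so their entrywise sum is the matrix all of whose entries equal $1$. Writing $J_k$ for this all-ones matrix, this gives the single identity $L_k + I_k + U_k = J_k$, which is where all the content of the lemma lives.

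From this identity I would simply rearrange to isolate $L_k$, obtaining $L_k = J_k - (I_k + U_k)$, and then apply both sides to the vector $w$, so that $L_k \cdot w = J_k \cdot w - (I_k + U_k) \cdot w$. This step is purely formal and uses only linearity of matrix multiplication, so no hypothesis on $w$ is needed yet.

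The last step is to compute $J_k \cdot w$ using the assumption. Since every row of $J_k$ is $(1,\ldots,1)$, each coordinate of $J_k \cdot w$ equals $\sum_{j=1}^k w_j$. Invoking the hypothesis $\sum_{j=1}^k w_j = 1$ collapses this to $J_k \cdot w = (1,\ldots,1)^T$, and substituting into the previous display yields $L_k \cdot w = (1,\ldots,1)^T - (I_k + U_k)\cdot w$, which is exactly the claim.

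I do not expect a genuine obstacle here: the proof is essentially a one-line matrix identity together with a trivial use of the normalization condition. The only point requiring a moment's care is the decomposition $L_k + I_k + U_k = J_k$, which must be read off directly from the definitions of the three matrices; once that is in hand, the rest is immediate.
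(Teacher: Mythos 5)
Your proof is correct, but it is organized differently from the paper's. The paper proves the identity by direct entrywise computation: it writes out $L_k \cdot w$ as the vector of partial sums $\bigl(0,\ w_1,\ w_1+w_2,\ \ldots,\ \sum_{j=1}^{k-1}w_j\bigr)^T$, uses the hypothesis $\sum_{j=1}^k w_j = 1$ to rewrite each partial sum $\sum_{j=1}^{i-1}w_j$ as $1-\sum_{j=i}^{k}w_j$, and then checks that $(1,\ldots,1)^T-(I_k+U_k)\cdot w$ has exactly the same coordinates. You instead package the whole computation into the single matrix identity $L_k+I_k+U_k=J_k$ (with $J_k$ the all-ones matrix), rearrange by linearity to $L_k\cdot w = J_k\cdot w-(I_k+U_k)\cdot w$, and invoke the hypothesis only once, to collapse $J_k\cdot w$ to $(1,\ldots,1)^T$. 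Your route is more conceptual and makes transparent where the normalization $\sum_j w_j=1$ is actually used (solely to evaluate $J_k\cdot w$), whereas the paper's computation threads the hypothesis through every coordinate; on the other hand, the paper's version displays explicitly the partial-sum structure of $L_k\cdot w$ and $(I_k+U_k)\cdot w$, which is the form reused later when the system $M\cdot x=l$ is solved row by row in Section~\ref{ssec:resol}. Both arguments are complete and essentially one-liners; the choice between them is a matter of emphasis.
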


\begin{proof}
First note that since  $\sum_{j=1}^{k} w_j = 1$,
$$L_k \cdot w=
\begin{bmatrix}
0 & 0 & \cdots & 0 \\
1 & \ddots & \ddots & \vdots \\
\vdots & \ddots & \ddots & 0 \\
1 & \cdots & 1 & 0 \end{bmatrix}
  \cdot  
 \begin{bmatrix}
w_1 \\
\vdots \\
w_k
\end{bmatrix} =  \begin{bmatrix}
0\\
w_1\\
w_1 + w_2\\
\vdots \\
\sum_{j=1}^{k-1}w_j
\end{bmatrix}
= 
 \begin{bmatrix}
1 - \sum_{j=1}^{k} w_j \\
1 - \sum_{j=2}^{k}w_j\\
\vdots \\
1- w_k
\end{bmatrix}
 $$
\noindent On the other hand, if we let $ \mathbf{1}=(1,\ldots,1)^T$, we have also:
$$  \mathbf{1}- (I_k +
U_k)\cdot w =  \mathbf{1}- 
\begin{bmatrix}
1 & 1 & \cdots & 1 \\
0 & 1 & \ddots & \vdots \\
\vdots & \ddots & \ddots & 1 \\
0 & \cdots & 0 & 1 \end{bmatrix}
\cdot
 \begin{bmatrix}
w_1 \\
\vdots \\
w_k
\end{bmatrix}= \begin{bmatrix}
1 - \sum_{j=1}^{k} w_j \\
1 - \sum_{j=2}^{k}w_j\\
\vdots \\
1- w_k
\end{bmatrix}
$$  

\end{proof}

\begin{lemma}\label{lem:luz}
For $z = e_i - e_j$, we have that $(L_k+U_k)\cdot z = -z$.
\end{lemma}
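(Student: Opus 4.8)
The plan is to exploit the fact that $L_k$ and $U_k$ have complementary support off the diagonal and both vanish on the diagonal, so their sum is exactly the matrix with $0$'s on the diagonal and $1$'s everywhere else. First I would record this observation explicitly: writing $J$ for the $k\times k$ all-ones matrix, one has
\[
(L_k+U_k)_{ml}=\begin{cases}1 & \text{if } m\neq l\\ 0 & \text{if } m=l,\end{cases}
\]
so that $L_k+U_k=J-I_k$. This rewriting is the only conceptual step; everything else is a one-line verification.

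Next I would use the defining property of $z=e_i-e_j$, namely that the sum of its components is $(+1)+(-1)=0$. The action of $J$ on any vector $w$ produces the vector all of whose entries equal $\sum_l w_l$; hence $J\cdot z=0$ since the entries of $z$ sum to zero. Combining this with the decomposition above gives
\[
(L_k+U_k)\cdot z=(J-I_k)\cdot z=J\cdot z-z=0-z=-z,
\]
which is the claim.

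If one prefers a purely component-wise argument that avoids naming $J$, I would instead compute the $m$-th entry of $(L_k+U_k)\cdot z$ directly as $\sum_{l\neq m} z_l$. Since $\sum_l z_l=0$, this equals $\bigl(\sum_l z_l\bigr)-z_m=-z_m$, and as this holds for every index $m$ we again conclude $(L_k+U_k)\cdot z=-z$. I do not expect any genuine obstacle here: the statement is essentially the assertion that $e_i-e_j$ is an eigenvector of $J-I_k$ with eigenvalue $-1$, reflecting that $J$ annihilates the zero-sum subspace. The only point requiring a moment's care is confirming that the off-diagonal supports of $L_k$ and $U_k$ are disjoint and jointly cover all off-diagonal positions, so that their sum really is $J-I_k$ with no overlap or omission on the diagonal.
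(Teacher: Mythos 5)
Your proof is correct, but it takes a genuinely different route from the paper's. The paper argues from the column structure of the two matrices: assuming w.l.o.g.\ that $i>j$, it computes $U_k\cdot(e_i-e_j)=\sum_{s=j}^{i-1}e_s$ and $L_k\cdot(e_i-e_j)=-\sum_{s=j+1}^{i}e_s$ as telescoping differences of basis-vector images, and adds the two to obtain $e_j-e_i=-z$. You instead identify $L_k+U_k=J-I_k$ (with $J$ the all-ones matrix) and observe that $J$ annihilates $z$ because the entries of $z=e_i-e_j$ sum to zero, so $(L_k+U_k)\cdot z=J\cdot z-z=-z$. Both arguments are sound; the one point you flag as needing care --- that the off-diagonal supports of $L_k$ and $U_k$ are disjoint and jointly cover all off-diagonal positions --- is immediate from the definitions. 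Your route is shorter, requires no case split on the relative order of $i$ and $j$ (nor the w.l.o.g.\ reduction to $i>j$), and proves something strictly stronger: every zero-sum vector is a $(-1)$-eigenvector of $L_k+U_k$, of which $e_i-e_j$ is a special case. The paper's computation does produce the individual images $U_k\cdot z$ and $L_k\cdot z$ explicitly, which would matter if the two matrices later needed separate treatment; but the rest of the paper (the injectivity theorem and the resolution algorithm) only ever invokes the combined identity, so nothing is lost by your approach.
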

\begin{proof}
If $i=j$, then $z=0$ and the results is true.
Suppose w.l.o.g. that $i>j$ (otherwise we just prove the result
for $-z$). Then 
$U_k\cdot (e_i-e_j) = \sum_{s=1}^{i-1} e_s - \sum_{s=1}^{j-1} e_s =
\sum_{s=j}^{i-1} e_s.$
Similarly
$L_k\cdot (e_i-e_j) = \sum_{s=i+1}^{k} e_s - \sum_{s=j+1}^{k} e_s = \sum_{s=j+1}^{i} -e_s.$
Therefore
$(L_k+U_k)\cdot(e_i-e_j)=\sum_{s=j}^{i-1} e_s -  \sum_{s=j+1}^{i} e_s = e_j -e_i = -z.$
\end{proof}

\subsection{How to recover the bids when knowing the $l_{i j}$'s}

As discussed above, we can represent the function $f$ as a matrix
multiplication. Let $M$ be the following square matrix of size $nk
\times nk$:
\begin{equation*}
M=\left[
\begin{matrix}
(U+L) & U   &  \ldots & \ldots & U \\
(U+I) & (U+L) & U & \ldots & U \\
\vdots & \ddots & \ddots & \ddots & \vdots \\
(U+I) & \ldots & (U+I) & (U+L) & U \\
(U+I) & \ldots & \ldots & (U+I) & (U+L) \\
\end{matrix}
\right]\text{.~Then}~f (b) = M \cdot b\text{.}
\end{equation*}

The function takes as input a vector composed of $n$ vectors, each of
$k$ bits. It returns the $nk$ values $l_{i j}$, $1 \leq i \leq n$ and $1 \leq j \leq k$. As explained above, the structure of the matrix is defined by the formula that computes $\gamma_{i j}^a$, which consists essentially of three factors: first we multiply all $\alpha_{i j}$ which encode bigger bids (represented by the matrix $U$), then we multiply all $\alpha_{i j}$ which encode smaller bids by the same bidder (represented by adding the matrix $L$ on the diagonal), and finally we multiply by all $\alpha_{i j}$ which encode the same bid by bidders with a smaller index (represented by adding the matrix $I$ on the lower triangle of $M$). In our encoding there will be a ``$1$'' in the vector for each $Y$ in the protocol, hence $f$ will count how many $Y$s are multiplied when computing $\gamma_{i j}^a$. Using this representation we can prove the following theorem. 
\begin{theorem}
$f$ is injective on valid bid vectors, {\it i.e.} for two different correct bid vectors 
$u=[u_1,\ldots, u_k]^T$
and
$v = [v_1,\ldots,v_k]^T
$
with $u \neq v$ we have $M \cdot u \neq M \cdot v$.
\end{theorem}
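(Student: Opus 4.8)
The plan is to prove the equivalent statement that the kernel of $M$ meets the relevant subspace only in $0$. Set $z=u-v$ and split its $nk$ coordinates into $n$ consecutive length-$k$ blocks $z_1,\dots,z_n$, one per bidder. Since a valid bid vector carries a single $Y$ per bidder, each $u_a$ and $v_a$ is a canonical basis vector, so every block $z_a=u_a-v_a$ has coordinate sum $0$, and $u\neq v$ means that at least one block is nonzero. It therefore suffices to show that $Mz=0$ forces $z_a=0$ for all $a$; I would in fact prove this for every $z$ whose blocks have coordinate sum $0$, which is cleaner and strictly stronger.

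First I would rewrite the block action of $M$ in closed form. Writing $w=\sum_{b=1}^{n}z_b$ and $S_a=\sum_{b<a}z_b$, the $a$-th block of $Mz$ reads $(U+L)z_a+(U+I)S_a+U(w-z_a-S_a)$. The decisive simplification is the identity $I+U+L=J$, where $J$ is the all-ones matrix; since $J$ annihilates every coordinate-sum-zero vector (this is Lemma~\ref{lem:luz} extended from $e_i-e_j$ to the whole subspace), applying it to $z_a$, $S_a$ and $w-z_a-S_a$ collapses the block to $(Mz)_a=L\,z_a+S_a-(I+L)\,w$. Hence $Mz=0$ is equivalent to the system $L\,z_a+S_a=(I+L)\,w$ for $a=1,\dots,n$.

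From this system I would extract two facts. Subtracting the equation for block $a$ from that for block $a+1$, and using $S_{a+1}-S_a=z_a$, gives the recurrence $L\,z_{a+1}=(L-I)z_a$; the block $a=1$ (where $S_1=0$) gives the boundary relation $L\,z_1=(I+L)\,w$. The main obstacle is that $L$ is singular, so the recurrence cannot be solved forward. The key move is to solve it \emph{backward}: rewriting it as $(I-L)z_a=-L\,z_{a+1}$ and using that $I-L$ is invertible precisely because $L$ is nilpotent (Lemma~\ref{lem:nilpotent}), one obtains $z_a=-N\,z_{a+1}$ with $N=(I-L)^{-1}L$, and $N$ is itself nilpotent. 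Iterating yields $z_a=(-N)^{n-a}z_n$ for every $a$, so the whole vector is determined by the last block, with $w=\big(\sum_{m=0}^{n-1}(-N)^m\big)z_n$.

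It remains to feed these expressions into the boundary relation $L\,z_1=(I+L)\,w$. Using $z_1=(-N)^{n-1}z_n$ it becomes a single constraint $\Phi\,z_n=0$ with $\Phi=(I+L)\sum_{m=0}^{n-1}(-N)^m-(-1)^{n-1}L\,N^{n-1}$. Both terms are polynomials in $L$ (since $N=(I-L)^{-1}L$ is one), and substituting $L=0$, hence $N=0$, leaves $\Phi=I$; thus $\Phi=I+(\text{nilpotent})$ and is therefore invertible, whence $z_n=0$ and, by back-substitution, every $z_a=0$. Consequently $z=u-v=0$, i.e.\ $M\,u\neq M\,v$ whenever $u\neq v$, which is the injectivity of $f$. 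I expect the only genuinely delicate bookkeeping to be this last verification that the constant term of $\Phi$ is exactly $I$; every invertibility statement used rests on the single fact that $I$ plus a nilpotent matrix is invertible (so the argument is valid over the underlying ring, not just a field), and this in turn comes from the nilpotency of $L$.
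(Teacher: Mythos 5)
Your proposal is correct, and it takes a genuinely different route from the paper's own proof. The paper argues by a double induction: it multiplies each block row of $M\cdot z=0$ by successive powers $U^{k-a}$, uses Lemma~\ref{lem:luz} to replace $(L+U)z_b$ by $-z_b$ (each block of $z$ being a difference $e_i-e_j$), and uses the nilpotency of $U$ (Lemma~\ref{lem:nilpotent}) together with nested induction hypotheses to kill all remaining terms, eventually forcing $z=0$. You instead eliminate $U$ once and for all through the identity $I+L+U=J$ (the all-ones matrix), which annihilates every blockwise coordinate-sum-zero vector; this single identity subsumes both Lemma~\ref{lem:somme} and Lemma~\ref{lem:luz} of the paper. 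Your subsequent steps all check out: the block expansion $(Mz)_a=(U+L)z_a+(U+I)S_a+U(w-z_a-S_a)$ is right; substituting $U=J-I-L$ does collapse it to $Lz_a+S_a-(I+L)w$; the difference and boundary equations follow; $N=(I-L)^{-1}L$ is nilpotent because $(I-L)^{-1}$ is a polynomial in $L$ and commutes with it; and the constant-term argument for $\Phi$ is sound, since every term of $\Phi-I$ carries a factor of $L$ or $N$ and a finite sum of commuting nilpotents is nilpotent, so $\Phi$ is unipotent and invertible (over $\mathbb{Z}$, as you note). What your approach buys: a strictly stronger statement (injectivity of $M$ on the whole subspace of vectors whose blocks sum to zero, not just on differences of valid bid vectors), a closed-form resolution of the kernel constraint instead of a two-level induction, and a backward-substitution structure that anticipates the paper's own efficient inversion algorithm of Section~\ref{ssec:resol}, where the system is likewise solved from the last entries backward. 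What the paper's approach buys: it never forms an inverse matrix, relying only on the nilpotency of $U$ and the action of $L+U$ on differences $e_i-e_j$, so it is more elementary term-by-term, at the price of noticeably more intricate bookkeeping in the nested induction.
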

\begin{proof}
Let $u$ and $v$ be two correct bid vectors such that $u \neq v$. We
want to prove that $M \cdot u \neq M \cdot v$. We make a proof by contradiction,
hence we
assume that $M \cdot u =M \cdot v$ or that $M \cdot (u-v)=0$.  Because $u$
and $v$ are two correct bid vectors, each one of them is an element of the
canonical basis $(e_1, \ldots, e_k)$, i.e. $u=e_i$ and $v=e_j$, as
shown in Section~\ref{ssec:outcome}.
We denote $u-v$ by $z$, and consequently $z=e_i-e_j$.
Knowing that $M \cdot z=0$, we prove by induction on $a$ that for all $a$ the
following property $P(a)$ holds:

\[
P(a): \forall l,  0 < l \leq a, diag(U^{k-l})\cdot z = 0 
\]
where $diag(U^{k-x})$ is a $nk \times nk$ block diagonal matrix containing only
diagonal blocks of the same matrix $U^{k-x}$. 
The validity of $P(k)$ proves in particular that $diag(U^{0})\cdot z_l=0$, i.e. $z=0$ which contradicts our hypothesis.
\begin{itemize}
\item Case $a = 1$: we also prove this base case by induction,
  i.e. for all $b \geq 1$ the property $Q(b)$ holds, where:
 $$Q(b): \forall m,  0< m \leq  b, U^{k-1}\cdot z_m = 0$$
 which gives us that $U^{k-1}\cdot z = 0$.
  \begin{itemize}
   \item Base case $b = 1$: 
    We start by looking at the multiplication of the first row of $M$ with $z$. We obtain:
    $(L+U)\cdot z_1 + U \cdot (z_2+ \ldots + z_k) = 0.$
    We can multiply each side by $U^{k-1}$, and use
    Lemma~\ref{lem:luz} to obtain:
    $U^{k-1}\cdot [ -z_1 + U^k\cdot (z_2+ \ldots + z_k)] = 0 .$
    Since $U$ is nilpotent, according to Lemma~\ref{lem:nilpotent}
    the latter gives $-U^{k-1}\cdot z_1 = 0$.
    Hence we know $Q(1): U^{k-1}\cdot z_1=0$, i.e. the last entry of $z_1$ is 0.
   \item Inductive step $b + 1$: assume $Q(b)$. Consider now the multiplication of the $(b+1)$-th row of the matrix $M$:
    \\$(U+I)\cdot z_1 + \ldots + (U+I)\cdot z_b + (L+U)\cdot z_{b+1} +
    U\cdot  (z_{b+2} + \ldots + z_k) = 0.$
    Then by multiplying by $U^{k-1}$ and using Lemma~\ref{lem:luz}
    we obtain:
    \\$U^{k-1}\cdot [(U+I) \cdot z_1 + \ldots + (U+I)\cdot z_b -z_{b+1} + U \cdot(z_{b+2} + \ldots + z_k)] = 0.$
    Since $U$ is nilpotent according to Lemma~\ref{lem:nilpotent} we have 
    $U^{k-1}\cdot z_1 + \ldots + U^{k-1} \cdot z_b - U^{k-1}\cdot z_{b+1} = 0.$
    Using the fact that for all $m < b$ we have $U^{k-1}\cdot z_m=0$, the
    latter gives $- U^{k-1}\cdot  z_{b+1} = 0$.
   \end{itemize}
 \item Inductive step $a + 1$: assume $P(a)$. By induction on $b \geq
   1$ we will show that $Q'(b)$ holds, where 
$$Q'(b): \forall m, 0 < m \leq b, U^{k-(a+1)}\cdot z_m = 0$$ which gives us that
$U^{k-(a+1)}\cdot z = 0$, i.e. $P(a+1)$.
  \begin{itemize}
   \item Base case $b = 1$: 
    Consider the multiplication of the first row with $U^{k-(a+1)}$:
    $U^{k-(a+1)}\cdot [ (L+U) \cdot z_1 + U\cdot (z_2+ \ldots + z_k)] = 0$
    which can be rewritten as
    $-U^{k-(a+1)} \cdot  z_1 + U^{k-a} \cdot  ( z_2 + \ldots + z_k)] = 0 .$ 
    Using $U^{k-a} \cdot z_l = 0$ for all $l$, we can conclude that
    $-U^{k-(a+1)}\cdot  z_1 = 0 ,$ i.e. $Q'(1)$ holds.
   \item Inductive step $b + 1$: assume $Q'(b)$. Consider now the $(b+1)$-th row of the matrix $M$:
    \\$(U+I)\cdot z_1 + \ldots + (U+I)\cdot z_b + (L+U)\cdot z_{b+1} + U\cdot  (z_{b+2} + \ldots + z_k) = 0.$
    Then by multiplying by $U^{k-(a+1)}$ and using
    Lemma~\ref{lem:luz} we obtain:
    \\$U^{k-(a+1)}\cdot [(U+I)\cdot z_1 + \ldots + (U+I)\cdot z_b + -z_{b+1} + U
    \cdot  (z_{b+2} + \ldots + z_k)] = 0.$
    Using $U^{k-a}\cdot  z_l = 0$ for all $l$, we can conclude that 
    $U^{k-(a+1)} \cdot  z_1 + \ldots + U^{k-(a+1)}\cdot   z_b - U^{k-(a+1)} \cdot  z_{b+1} = 0.$
    Now, for all $m < b$, we have $U^{k-(a+1)} \cdot z_m=0$, so that $-U^{k-(a+1)} \cdot  z_{b+1}=0;$ i.e. $Q'(b+1)$ holds.
\qed  \end{itemize}
\end{itemize} 
\end{proof}
This theorem shows that if there is a constellation of bids that led to certain values $l_{i j}$, this constellation is unique. Hence we are able to invert $f$ on valid outputs. We will now show that this can be efficiently done.

\subsubsection{An efficient algorithm}\label{ssec:resol}

Our aim is to solve the following linear system: $M\cdot x = l$.
We will use the same steps we used for the proof of injectivity to
solve this system efficiently.

Consider
the $r$-th block of size $k$ of the latter system. 
We have $x_r = (x_{r,1}, x_{r,2}, \ldots,$ $x_{r,k})$

and the $r$-th block of $M\cdot x$~is
\begin{multline*}
(U+I) x_1+\ldots+(U+I)x_{r-1}+(L+U)x_r+Ux_{r+1}+\ldots +Ux_n \\= 
\textstyle U(\sum_{i=1}^{n}x_i)+(\sum_{i=1}^{r-1}x_i)+L x_r.
\end{multline*}
As the $r$-th block of $l$ is $l_r$, we thus have: 
$U\left(\sum_{i=1}^{n}x_i\right)+\sum_{i=1}^{r-1}x_i +L x_r=l_r$.

Using Lemma~\ref{lem:somme}, with $w_j=x_{r,j}$ for $j=1..k$, we can exchange $L$ in the latter to get
$
U\left(\sum_{i=1}^{n}x_i\right)+\sum_{i=1}^{r-1}x_i
+\left( \mathbf{1}- \left(I +U\right)x_r\right)=l_r.$
Hence, $
U\left(\sum_{i=1}^{n}x_i\right)+\sum_{i=1}^{r-1}x_i
+\mathbf{1}- x_r-Ux_r=l_r$, 
so that we now have:\\
\begin{equation}\label{eq:algo}
\begin{cases}
x_1 = \mathbf{1}-l_1+U\left(\sum_{i=2}^{n}x_i\right) \\
x_r = \mathbf{1}-l_r+\sum_{i=1}^{r-1}x_i+U\left(\sum_{i=1,i\neq
    r}^{n}x_i\right) & \text{~if~} 1<r\leq n
\end{cases}
\end{equation}

This gives us a formula to compute the values of $x_{i,j}$, starting
with the last element of the first block $x_{1,k}$. Then we can
compute the last elements of all other blocks $x_{2,k}, \ldots,
x_{n,k}$, and then the second to last elements $x_{1,k-1}, \ldots,
x_{n,k-1}$, etc.

The idea is to project the above Equation~(\ref{eq:algo}) on
the $t$-th coordinate. Then, the $t$-th row of $U$ has ones only
starting at index $t+1$, and thus the $t$-th row of $U z$ involves
only the elements $z_{t+1},\ldots,z_k$.
We thus have:
$e_t^T
U\left(\sum_{i=1,i\neq r}^{n}x_i\right) = \sum_{j=t+1}^{k}\sum_{i=1,i\neq r}^{n}x_{i,j}
$ for $t<k$ and $e_k^TU=0$.
Now $e_t^T x_r = x_{r,t}$, $e_t^T l_r = l_{r,t}$ and $e_t^T
\mathbf{1}=1$. Hence, we therefore get the
following where at row $t$, the right hand side involves only already
computed values: 
\begin{equation}\label{eq:algoform}
\begin{dcases}
x_{1,k} = 1-l_{1,k}\\
x_{r,k} = 1-l_{r,k}+\sum_{i=1}^{r-1}x_{i,k} & \text{~if~} 1<r\leq n \\
x_{1,t} = 1-l_{1,t}+\sum_{j=t+1}^{k}\sum_{i=2}^{n}x_{i,j}& \text{~if~}
1\leq t<k \\
x_{r,t} = 1-l_{r,t}+\sum_{i=1}^{r-1}x_{i,t}+
\sum_{j=t+1}^{k}\sum_{i=1,i\neq r}^{n}x_{i,j}& \text{~if~} 1\leq t<k
\text{~and~} 1<r\leq n
\end{dcases}\end{equation}

\subsubsection{Complexity Analysis.}

To obtain all values, we have to apply the above
Formula~(\ref{eq:algoform}) for each $1\leq r \leq n$ and $1\leq t \leq k$, hence
we can bound the arithmetic cost by:
$$\sum_{r=1}^{n} \sum_{t=1}^{k} \left(r+(k-t)n\right)= 
\frac{1}{2}n^2k^2+o\left(n^2k^2\right)$$
This is efficient enough to be computed on a standard PC for realistic
values of $n$ (the number of bidders) and $k$ (the number of possible
bids). Those could be less than a hundred bidders with a thousand
different prices, thus requiring about the order of only some
giga arithmetic operations.  
It is anyway the order of magnitude of the number of operations
required to compute all the encrypted bids.

\subsection{Attack on the random noise: how to obtain the $l_{i j}$'s}\label{ssec:noiserem}

In the previous section we showed that knowing the $l_{i j}$'s allows
us the efficiently break the privacy of all bidders. Here is how
to obtain the $l_{i j}$'s.

The seller will learn all 
$v_{i j} = \left( Y^{l_{i j}} \right)^{(\sum_{h=1}^n m_{i j}^h)}$
at the end of the protocol. Since the $m_{i j}^h$ are randomly chosen,
this will be a random value if $l_{i j} \neq 0$. However a malicious
bidder (``Mallory'', of index $a$) can cancel out the $m_{i j}^h$ as follows: in Step~\ref{it:outcome} of the protocol each bidder will compute his $\gamma_{i j}^{a}$ and $\delta_{i j}^{a}$. Mallory waits until all other bidders have published their values (the protocol does not impose any synchronization or special ordering) and then computes his values $\gamma_{i j}^{\omega}$ and $\delta_{i j}^{\omega}$ as:
\begin{center}$ \gamma_{i j}^{\omega} = \left( \left( \prod_{h=1}^n \prod_{d=j+1}^k \alpha_{h d} \right) \cdot \left( \prod_{d=1}^{j-1} \alpha_{i d} \right) \cdot \left( \prod_{h=1}^{i-1} \alpha_{h j} \right) \right) \cdot \left( \prod_{k \neq \omega} \gamma_{i j}^{k} \right)^{-1}
$
\\$
 \delta_{i j}^{\omega} = \left( \left( \prod_{h=1}^n \prod_{d=j+1}^k \beta_{h d} \right) \cdot \left( \prod_{d=1}^{j-1} \beta_{i d} \right) \cdot \left( \prod_{h=1}^{i-1} \beta_{h j} \right) \right) \cdot \left( \prod_{k \neq \omega} \delta_{i j}^{k} \right)^{-1}
$\end{center}
The first part is a correct encryption of $Y^{l_{i j}}$, with
$m_{ij}^{\omega}=1$ for all $i$ and $j$. The second
part is the inverse of the product of all the other bidders $\gamma_{i
  j}^{k}$ and $\delta_{i j}^{k}$, and thus it will eliminate the
random exponents. Hence after decryption the seller obtains $v_{i j} =
Y^{l_{i j}}$, where $l_{i j}<n$ for a small $n$. He can compute $l_{i
  j}$ by simply (pre-)computing all possible values $Y^r$ and testing
for equality. This allows the seller to obtain the necessary values
and then to use the resolution algorithm to obtain each bidder's
bid. Note that although we changed the intermediate values, the output
still gives the correct result (i.e. winning bid).
Therefore, the attack might even be unnoticed by the other
participants. 
Note also that choosing a different $Y_i$ per bidder does not prevent
the attack, since all the $Y_i$ need to be public in order to prove
the correctness of the bid in Step~\ref{it:bid} of the protocol.

However the protocol requires Mallory to prove that $\gamma_{i j}^{\omega}$
and $\delta_{i j}^{\omega}$ have the same exponent. This is obviously the
case, but Mallory does not know the exact value of this exponent.
Thus it is impossible for him to execute the proposed
zero-knowledge protocol directly.

In the original paper~\cite{Brandt06} the malleable interactive
proof of~\cite{Chaum92}, presented in Section~\ref{ssec:PoK}, is used
to prove the correctness of $\gamma_{i j}^a$ and $\delta_{i j}^a$ in
Step~\ref{it:outcome} of the protocol.

If this proof is not converted into a non-interactive proof, then Mallory is able to fake it as follows.

\subsection{Proof of equality of the presented outcomes}\label{ssec:eqdlattack}
Note that we can rewrite $\gamma_{i j}^{\omega}$ and $\delta_{i j}^{\omega}$ as:
\[\textstyle
 v = \gamma_{i j}^{\omega} = {\underbrace{ \left( \left( \prod_{h=1}^n \prod_{d=j+1}^k \alpha_{h d} \right) \cdot \left( \prod_{d=1}^{j-1} \alpha_{i d} \right) \cdot \left( \prod_{h=1}^{i-1} \alpha_{h j} \right) \right)}_{g_1}}^{1-\left(\sum_{k \neq \omega} m_{i j}^{k}\right)}
\]
\[\textstyle
 w = \delta_{i j}^{\omega} = {\underbrace{ \left( \left( \prod_{h=1}^n \prod_{d=j+1}^k \beta_{h d} \right) \cdot \left( \prod_{d=1}^{j-1} \beta_{i d} \right) \cdot \left( \prod_{h=1}^{i-1} \beta_{h j} \right) \right)}_{g_2}}^{1-\left(\sum_{k \neq \omega} m_{i j}^{k}\right)} 
\]
When Mallory, the bidder $m$, is asked by Victor for a proof of correctness of his
values, he starts by asking all other bidders for proofs to initialize
the man-in-the-middle attack of Figure~\ref{fig:exm}. Each of them
answers with values $\lambda_o = g_1^{z_o}$ and $\mu_o =
g_2^{z_o}$. Mallory can then answer Victor with values
$\lambda=\prod_{o} \lambda_o^{-1}$ and
$\mu=\prod_{o} \mu_o^{-1}$, where  $o \in  ([1,n] \setminus
m)$. Victor then sends a challenge $c$, which Mallory simply forwards
to the other bidders. They  answer with $r_o = z_o + c \cdot m_{i
  j}^o$, and Mallory sends $r = c - \sum_o r_o$ to Victor, who can
check that $g_1^r = \lambda \cdot v^c$ and $g_2^r = \mu \cdot w^c$. If
the other bidders did their proofs correctly, then Mallory's proof
will appear valid to Victor: 
\[\textstyle
\lambda \cdot v^c = \prod_o \lambda_o^{-1} \cdot {\left( g_1^{1 - \left(\sum_{o} m_{i j}^{o}\right)} \right)}^c = \prod_o g_1^{-z_o} \cdot g_1^{c-c\left(\sum_{o} m_{i j}^{o}\right)} = g_1^{c-\sum_{o} \left(z_o + c m_{i j}^{o}\right)}
\]
\[\textstyle
\mu \cdot w^c = \prod_o \mu_o^{-1} \cdot {\left( g_2^{1 - \left(\sum_{o} m_{i j}^{o}\right)} \right)}^c = \prod_o g_2^{-z_o} \cdot g_2^{c-c\left(\sum_{o} m_{i j}^{o}\right)} = g_2^{c-\sum_{o} \left(z_o + c m_{i j}^{o}\right)}
\]
Hence in the case of malleable interactive zero-knowledge proofs Mallory is able to modify the values $\gamma_{i j}^{\omega}$ and $\delta_{i j}^{\omega}$ as necessary, and even prove the correctness using the bidders. Hence the modifications may stay undetected and the seller will be able to break privacy.

\subsection{The complete attack and countermeasures}\label{ssec:countermitm}
Putting everything together, the attack works as follows:
\begin{enumerate}
 \item The bidders set up the keys as described in the protocol.
 \item They encrypt and publish their bids.
 \item They compute $\gamma_{i j}^{h}$ and $\delta_{i j}^{h}$ and publish them.
 \item Mallory, who is a bidder himself, waits until all other bidders have published their values. He then computes his values as defined above, and publishes them.
 \item If he is asked for a proof, he can proceed as explained above
   in Section~\ref{ssec:eqdlattack}.
 \item The bidders (including Mallory) jointly decrypt the values.
 \item The seller obtains all $Y^{l_{i j}}$'s. He can then compute
   the $l_{i j}$'s by testing at most $n$ possibilities.
 \item Once he has all values, he can invert the function $f$ as explained above.
 \item He obtains all bidders bids.
\end{enumerate}
Again, note that for all honest bidders, this execution will look normal, so they might not even notice that an attack took place.

To prevent this attack, one could perform the following actions:
\begin{itemize}
\item To counteract the removal of the noise of Section~\ref{ssec:noiserem},
  the bidders could check whether the product of the $\gamma_{i,j}^a$
  for all bidders $a$ is equal to the product of the $\alpha_{hd}$
  without any noise (exponent is $1$). Unfortunately, the
  man-in-the-middle attack generalizes to any exponent as shown in
  Figure~\ref{fig:generic}. Therefore the attacker could use a
  randomly chosen exponent only known to him.
\item As mentioned above, another countermeasure is the use of
  non-interactive, non-malleable proofs of knowledge. 
  In this case, we will show in Section~\ref{sec:attackauth} that it is still possible to attack
  a targeted bidder's privacy.
\end{itemize}

\section{Attacking verifiability}\label{sec:attackverif}
Brandt claims that the protocol is verifiable as the parties have to provide zero-knowledge proofs for their computations, however there are two problems.

\subsection{Exceptional values}
First, a winning bidder cannot verify if he actually won. 
To achieve privacy, the protocol hides all outputs of $v_{aj}$ except
for the entry containing ``1''\footnote{Note that the protocol contains a mechanism to resolve ties, i.e. there should always be exactly one entry equal to 1, even in the presence of ties.}. This is done by exponentiation with
random values $m_{ij}^a$ inside all entries $\gamma^a_{ij}$ and
$\delta^a_{ij}$, i.e. by computing $x_{ij}^{\sum_a m_{ij}^a}$ where
$x_{ij}$ is the product of some $\alpha_{ij}$ as specified in the
protocol. 
If $x_{ij}$ is one, $x_{ij}^m$ will still return one for any $m$, and
in principle something different from one for
any other value of $x_{ij}$. Now, the random values $m^a_{ij}$ may add up
to zero (mod $q$), hence the returned value will be
$x_{ij}^m=x_{ij}^0=1$ and the bidder will conclude that he won,
although he actually lost ($x_{ij} \neq 1$). 
Hence simply verifying the proofs is not sufficient to be convinced
that the observed outcome is correct. 
For the same reason the seller might observe two or more ``1''-values,
even though all proofs are correct. In such a situation he is unable to decide which
bidder actually won since he cannot determine which ``1''s correspond to a real bids, and hence which bid is the highest real bid. If two ``1''s correspond to real bids, he could even exploit such a situation to his
advantage: he can tell both bidders that they won and take
money from both, although there is only one good to sell -- this is normally prohibited by the protocol's tie-breaking mechanism. 
If the bidders do not exchange additional data there is no way for them to discover that something went wrong, since the seller is the only party having access to all values. 

A solution to this problem could work as follows: when computing the
$\gamma^a_{ij}$ and $\delta^a_{ij}$, the bidders can check if the
product $$x_{ij} = \left( \prod_{h=1}^n \prod_{d=j+1}^k \alpha_{h d} \right) \cdot \left( \prod_{d=1}^{j-1} \alpha_{i d} \right) \cdot \left( \prod_{h=1}^{i-1} \alpha_{h j} \right)$$ is equal to one -- if yes, they restart the protocol using different keys and random values. If not, they continue, and check if $\prod_a \gamma^a_{ij} = 1$. If yes, they choose different random values $m_{ij}^a$ and re-compute the $\gamma^a_{ij}$ and $\delta^a_{ij}$, otherwise they continue. Since the probability of the random values adding up to zero is low, this will rapidly lead to correct values.

\subsection{Different private keys}
Second, the paper does not precisely specify the proofs that have to
be provided in the joint decryption phase. 
If the bidders only prove that they use the same private key on all
decryptions \emph{and not also that it is the one they used to
generate their public key}, they may use a wrong one. 
This will lead to a wrong decryption where with very high probability
no value is ``1'', as they will be random. 
Hence all bidders will think that they lost, thus allowing a malicious
bidder to block the whole auction, as no winner is determined.
Hence, if we assume that the verification test consists in verifying
the proofs, a bidder trying to verify that he lost using the proofs
might perform the verification successfully, although the result is
incorrect and he actually won -- since he would have observed a ``1''
if the vector had been correctly decrypted. 

This problem can be addressed by requiring the bidders to also prove
that they used the same private key as in the key generation phase.

\section{Attacks using the lack of
  authentication}\label{sec:attackauth}
The protocol as described in the original paper does not include any authentication of the messages. This means that an attacker in control of the network can impersonate any party, which can be exploited in many ways. However, the authors supposed in the original paper a ``reliable broadcast channel, i.e. the adversary has no control of communication''~\cite{Brandt06}. Yet even under this assumption dishonest participants can impersonate other participants by submitting messages on their behalf. Additionally, this assumption is difficult to achieve in asynchronous systems~\cite{Fischer85}. In the following we consider an attacker in control of the network, however many attacks can also be executed analogously by dishonest parties (which are considered in the original paper) in the reliable broadcast setting.

\subsection{Another attack on privacy}

Our first attack on privacy only works in the case of malleable
interactive proofs. If we switch to non-interactive non-malleable
proofs, Mallory cannot ask the other bidders for proofs using a
challenge of his choice.

However, even with non-interactive non-malleable zero-knowledge
proofs, the protocol is still vulnerable to attacks on a targeted
bidder's privacy if an attacker can impersonate any bidder of his
choice as well as the seller, which is the case for an attacker
controlling the network due to the lack of authentication. In
particular, if he wants to know Alice's bid he can proceed as follows:
\begin{enumerate}
 \item Mallory impersonates all other bidders. He starts by creating
   keys on their behalf and publishes the values $y_i$ and the
   corresponding proofs for all of them.
 \item Alice also creates her secret keyshare and publishes $y_a$
   together with a proof.
 \item Alice and Mallory compute the public key $y$.
 \item Alice encrypts her bid and publishes her $\alpha_{a j}$ and
   $\beta_{b j}$ together with the proofs.
 \item Mallory publishes $\alpha_{i j} = \alpha_{a j}$ and $\beta_{i
     j} = \beta_{a j}$ for all other bidders $i$ and also copies
   Alice's proofs.
 \item Alice and Mallory execute the computations described in the
   protocol and publish $\gamma_{i j}^a$ and $\delta_{i j}^a$.
 \item They compute $\phi_{i j}^a$ and send it to the seller.
 \item The seller publishes the $\phi_{i j}^a$ and computes the
   $v_{aj}$.
\end{enumerate}
Since all submitted bids are equal, the seller (which might also be
impersonated by Mallory) will obtain Alice's bid as the winning price,
hence it is not private any more. This attack essentially simulates a
whole instance of the protocol to make Alice indirectly reveal a bid
that was intended for another, probably real auction. To counteract
this it is not sufficient for Alice to check that the other bids are
different: Mallory can produce different $\alpha_{ij}=\alpha_{aj}y^x$
together with $\beta_{ij}=\beta_{aj}g^x$ which are still correct
encryptions of Alice bids.

Note that the same attack also works if dishonest bidders collude with the seller: they simply re-submit the targeted bidders bid as their own bid.

\subsection{Attacking fairness, non-repudiation and verifiability}

The lack of authentication obviously entails that a winning bidder can
claim that he did not submit his bid, hence violating
non-repudiation (even in the case of reliable broadcast). Additionally, this also enables an attack on
fairness: an attacker in control of the network can impersonate all
bidders vis-\`a-vis the seller, submitting bids of his choice on their
behalf and hence completely controlling the winner and winning
price. This also causes another problem with verifiability: it is
impossible to verify if the bids were submitted by the registered
bidders or by somebody else. 

\subsection{Countermeasures}

The solution to these problems is simple: all the messages need to
be authenticated, e.g. using signatures or Message Authentication
Codes (MACs) based on a trust anchor, for example a Public Key Infrastructure (PKI).

\section{Conclusion}\label{sec:concl}

In this paper we analyze the protocol of Brandt~\cite{Brandt06} from
various angles. We show that the underlying computations have a weakness which can be exploited by malicious bidders to break privacy if malleable interactive zero-knowledge proofs are used. 
We also identified two problems with verifiability and proposed
solutions. Finally we showed how the lack of authentication can be
used to mount different attacks on privacy, verifiability as well as
fairness and non-repudiation. Again we suggested a solution to address
the discovered flaws.

So sum up, the following countermeasures have to be implemented:
\begin{itemize}
 \item Use of non-interactive or non-malleable zero-knowledge proofs.
 \item All messages have to be authenticated, e.g. using a Public-Key Infrastructure (PKI) and signatures.
 \item In the outcome computation step: when computing the $\gamma^a_{ij}$ and $\delta^a_{ij}$, the bidders can check if $x_{ij}=\left( \prod_{h=1}^n \prod_{d=j+1}^k \alpha_{h d} \right) \cdot \left( \prod_{d=1}^{j-1} \alpha_{i d} \right) \cdot \left( \prod_{h=1}^{i-1} \alpha_{h j} \right)$ is equal to one -- if yes, they restart the protocol using different keys and random values. If not, they continue, and check if $\prod_a \gamma^a_{ij} = 1$. If yes, they choose different random values $m_{ij}^a$ and re-compute the $\gamma^a_{ij}$ and $\delta^a_{ij}$, otherwise they continue.
 \item In the outcome decryption step: the bidders have to prove that the value $x_a$ they used to decrypt is the same $x_a$ they used to generate their public key $y_a$ in the first step.
\end{itemize}

The attacks show that properties such as authentication can be
necessary to achieve other properties which might appear to be
unrelated at first sight, like for instance privacy. It also points
out that there is a difference between computing the winner in a fully
private way, and ensuring privacy for the bidders: in the second
attack we use modified inputs to break privacy even though the
computations themselves are secure. Additionally our analysis
highlights that the choice of interactive or non-interactive,
malleable or non-malleable proofs is an important decision in any protocol
design. 
 
As for possible generalizations of our attacks, of course the linear algebra part of our first attack is specific to this protocol. Yet the man-in-the-middle attack on malleable proofs as well as the need of authentication for privacy are applicable to any protocol. Similarly, checking all exceptional cases and ensuring that the same keys are used all along the process are also  valid insights for other protocols.

\subsubsection{Acknowledgments}
This work was partly supported by the ANR projects ProSe (decision ANR-2010-VERS-004-01) and HPAC (ANR-11-BS02-013).
We thank Dorian Arnaud, Jean-Baptiste Gheeraert, Maud Lefevre, Simon
Moura and J\'er\'emy Pouzet for spotting an error in the description
of the efficient algorithm of the attack, in an earlier version of
this paper.

\bibliography{references}
\bibliographystyle{plain}
\end{document}